\documentclass[11pt]{article}
\newcommand{\PAPER}[1]{#1}
\newcommand{\LIPICSv}[1]{}

\PAPER{
\usepackage{fullpage,times,amsmath,amssymb,amsthm,hyperref,graphicx}

\newtheorem{theorem}{Theorem}[section]
\newtheorem{lemma}[theorem]{Lemma}

\title{Triangulating a Polygon with Holes in\\ Optimal (Deterministic) Time}
\author{Timothy M. Chan\thanks{
  Siebel School of Computing and Data Science, University of Illinois at Urbana-Champaign
  (tmc@illinois.edu)}}

}

\LIPICSv{

\usepackage[utf8]{inputenc}

\usepackage{microtype,hyperref,graphicx}
   \hypersetup{%
      breaklinks,%
      colorlinks=true,%
      urlcolor=[rgb]{0.25,0.0,0.0},%
      linkcolor=[rgb]{0.5,0.0,0.0},%
      citecolor=[rgb]{0,0.2,0.445},%
      filecolor=[rgb]{0,0,0.4},
      anchorcolor=[rgb]={0.0,0.1,0.2}%
   }

\title{Triangulating a Polygon with Holes in Optimal (Deterministic) Time}

\author{Timothy M. Chan}{Siebel School of Computing and Data Science, University of Illinois at Urbana-Champaign, USA}{tmc@illinois.edu}{https://orcid.org/0000-0002-8093-0675}{}

\titlerunning{}
\authorrunning{T.\,M. Chan}

\Copyright{Timothy M. Chan}

\ccsdesc[100]{Theory of computation~Computational geometry}

\keywords{Polygons, triangulation, intersection, derandomization}

\renewcommand{\paragraph}[1]{\subparagraph*{#1}}

}

\newcommand{\eps}{\varepsilon}
\newcommand{\TD}{\textit{TD}}

\newcommand{\DEG}{\textrm{deg}}
\newcommand{\polylog}{\mathop{\rm polylog}}

\begin{document}

\maketitle

\begin{abstract}
We consider the problem of triangulating a polygon with $n$ vertices and $h$ holes, or relatedly the problem of computing the trapezoidal decomposition of a collection of $h$ disjoint simple polygonal chains with $n$ vertices total. Clarkson, Cole, and Tarjan (1992) and Seidel (1991) gave randomized algorithms running in $O(n\log^*n + h\log h)$ time, while Bar-Yehuda and Chazelle (1994) described deterministic algorithms running in $O(n+h\log^{1+\varepsilon}h)$ or $O((n+h\log h)\log\log h)$ time, for an arbitrarily small positive constant $\varepsilon$. No improvements have been reported since. We describe a new $O(n + h\log h)$-time algorithm, which is optimal and deterministic.

More generally, when the given polygonal chains are not necessarily simple and may intersect each other, we show how to compute their trapezoidal decomposition (and in particular, compute all intersections) in optimal $O(n + h\log h)$ deterministic time when the number of intersections is at most $n^{1-\varepsilon}$.

To obtain these results, Chazelle's linear-time algorithm for triangulating a simple polygon is used as a black box.
\end{abstract}

\section{Introduction}

Triangulating a simple polygon is one of the most basic problems in computational geometry, with numerous applications.
Chazelle's  deterministic linear-time algorithm~\cite{Chazelle91} represents one of the landmark results in the area,
improving previous deterministic $O(n\log\log n)$-time algorithms by Tarjan and Van Wyck~\cite{TarjanW88} and Kirkpatrick, Klawe, and Tarjan~\cite{KirkpatrickKT92}
and previous randomized $O(n\log^*n)$-time algorithms by Clarkson, Tarjan, and Van Wyck~\cite{ClarksonTW89} and Seidel~\cite{Seidel91}.  Here, $n$ denotes the number of vertices in the input polygon and $\log^*$ is the (very slow-growing) iterated logarithm function.

In this paper, we study an extension of the problem: triangulating a polygon with $h$ holes (where the holes themselves are disjoint simple polygons enclosed in an outer simple polygon).  See Figure~\ref{fig:polygon} (left).
The problem can be equivalently stated in a number of different ways:

\begin{enumerate}
\item Computing the \emph{trapezoidal decomposition} of a polygon with $h$ holes.
The decomposition (also called the ``trapezoidization'', or ``vertical decomposition'',
or ``vertical visibility map'') is obtained by adding vertical line segments from each vertex to
the edge immediately above/below the vertex.
Linear-time reductions from trapezoidal decompositions to triangulations and vice versa are known~\cite{FournierM84}.

\LIPICSv{\smallskip}
\item Computing the trapezoidal decomposition of a collection of $h$ disjoint simple polygonal chains.
See Figure~\ref{fig:polygon} (right).
(We can enclose the chains with a bounding box and turn each chain into an infinitesimally thin hole,
while increasing $n$ and $h$ by only a constant factor.)

\LIPICSv{\smallskip}
\item Computing the trapezoidal decomposition of a plane straight-line graph (a graph drawn on the plane with straight-line edges without crossings, where edges are ordered around each vertex) with $h$ connected components.
(We can make tiny cuts to eliminate cycles and then take an Euler tour to turn each component into
a chain, while increasing $n$ and $h$ by only a constant factor.)
\end{enumerate}

\begin{figure}
\centering
\PAPER{
\includegraphics[page=1,scale=0.5]{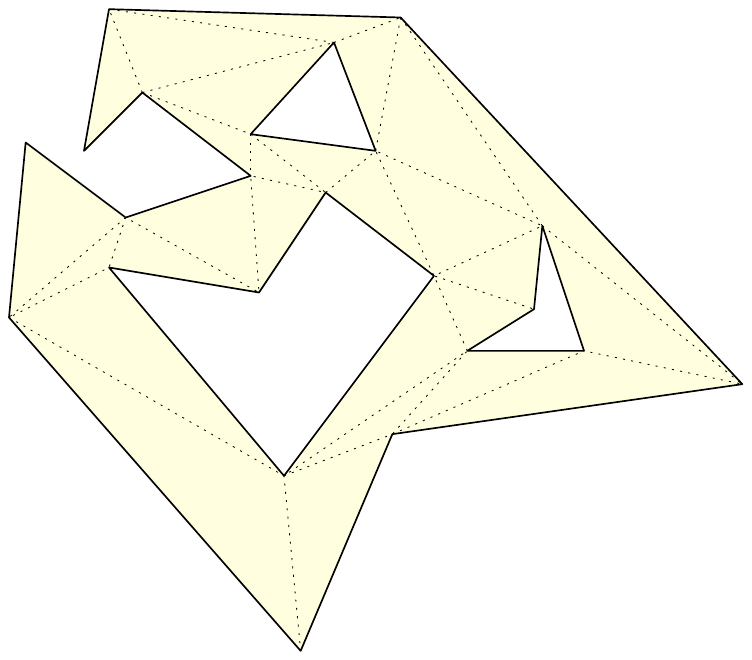}\hspace{5em}%
\includegraphics[page=2,scale=0.5]{fig_polygon.pdf}
}
\LIPICSv{
\includegraphics[page=1,scale=0.43]{fig_polygon.pdf}\hspace{5em}%
\includegraphics[page=2,scale=0.43]{fig_polygon.pdf}
}
\caption{(Left) A triangulation of a polygon with 3 holes.  (Right) The trapezoidal decomposition of 5 disjoint simple polygonal chains.}\label{fig:polygon}
\end{figure}


Previous randomized $O(n\log^*n)$-time algorithms for simple polygons can be adapted
to solve the problem in $O(n\log^*n + h\log h)$ expected time~\cite{ClarksonCT92,Seidel91}.
In 1994, Bar-Yehuda and Chazelle~\cite{Bar-YehudaC94} gave deterministic algorithms
running in $O(\min\{n+h\log^{1+\eps}h,\,(n+h\log h)\log\log h\})$ time for an arbitrarily small constant $\eps>0$.  These results came close to the known lower bound of $\Omega(n + h\log h)$,%
\footnote{
An $\Omega(h\log h)$ lower bound in the algebraic decision tree model can be shown by a simple reduction from sorting: to sort $h$ numbers $x_1,\ldots,x_h\in (-1,1)$, create a polygon that has a triangular outer boundary with vertices $(-1,1),(1,1),(0,-1)$, and a sufficiently small hole near $(x_i,x_i^2)$ for each $i\in\{1,\ldots,h\}$.
}
but there is still
an extra factor of $\log^*n$ for randomized algorithms, and $\log^\eps h$ or $\log\log h$ for deterministic.  No improvement has been reported since, deterministic or randomized.

Meanwhile, a number of papers (e.g.,~\cite{ChenW15,KapoorM00,MitchellPSW19,Wang23}) have given algorithms
for various problems related to shortest paths and visibility in polygonal domains (polygonal environments with multiple  disjoint polygonal obstacles) whose time complexities are dominated by that of the triangulation problem and would be optimal if the triangulation complexity is $O(n + h\log h)$.  However, with the current gap of knowledge, these papers needed to state their results
more awkwardly, either by assuming that a triangulation of the free space of the polygonal domain (a polygon with holes) is given, or by adding an extra term $T$ to the time bound where $T$ represents the unknown complexity of the triangulation problem.  (A similar story occurred for various computational problems about simple polygons in the days before Chazelle's linear-time algorithm.)


\paragraph{Main result.}
We give the first optimal algorithm for triangulating a polygon with holes in $O(n + h\log h)$ time.
The algorithm is deterministic.

Many applications follow~\cite{ChenW15,KapoorM00,MitchellPSW19,Wang23}. (For example,
combining with Wang's result~\cite{Wang23}
yields an optimal $O(n+h\log h)$-time algorithm for finding the Euclidean shortest path between two given points
amidst $h$ disjoint polygonal obstacles with $n$ vertices total.)

\paragraph{Previous approaches.}
Known approaches, as well as our new algorithm, work with the reformulation of the problem in terms of the trapezoidal decomposition of $h$ disjoint polygonal chains.

One could try to directly modify Chazelle's linear-time algorithm for $h=1$~\cite{Chazelle91}, but this appears difficult (considering how notoriously complicated his algorithm was).
Bar-Yehuda and Chazelle's approach was to directly reduce the general $h$ case to the $h=1$ case:
join the multiple polygonal chains into one connected component by adding vertical line segments, and
then apply Chazelle's algorithm as a black box.
To do so, the first observation is that polygonal chains intersecting a common vertical line can
be joined together this way by \emph{Jordan sorting} (the problem of sorting intersections of polygonal chains along a line), which is solvable by known techniques~\cite{HoffmanMRT86}.  Roughly,
their idea was then to build an \emph{interval tree}~\cite{BergCKO08,PreparataS85} for the $x$-spans of the chains, and solve the
problem for each level of the interval tree (since each level consists of vertically separated instances in which
chains intersect a common vertical line).
The remaining task is merging the results of these $O(\log h)$ levels.
Bar-Yehuda and Chazelle proposed
a binary divide-and-conquer over the $O(\log h)$ levels, yielding $O((n+h\log h)\log\log h)$ time overall,
and alternatively a divide-and-conquer with a larger fan-out, yielding $O(n + h\log^{1+\eps}h)$.
This approach seems hard to improve further.

A different approach is randomized divide-and-conquer, as done by Clarkson et al.\ and Seidel \cite{ClarksonTW89,Seidel91}.  The trapezoidal decomposition of a random sample of $n/\log n$ edges
allows us to divide the problem into subproblems of roughly logarithmic size.  Assigning input elements to subproblems requires some form of ``point location'' and would normally cost $O(\log n)$ per element, but can be done faster knowing that the input elements comes from $h$ chains.  Since the recursion has $O(\log^*n)$ levels, the running time inherently increases by a $\log^*n$ factor, unfortunately.  Even for $h=1$ (the case of a simple polygon), Clarkson et al.\ and Seidel's methods do not achieve linear time (Amato, Goodrich, and Ramos~\cite{AmatoGR01} described a randomized linear-time algorithm which is simpler than Chazelle's,
but still relies on some of Chazelle's ingredients and is still complicated).
Furthermore, randomization seems essential: such random sampling methods where the sample size is large is generally harder to derandomize.

\paragraph{Our approach.}
The difficulty of improving the random sampling approach turns out to be overstated.
We observe that in combination with another known divide-and-conquer
strategy via \emph{planar graph separators} \cite{LiptonT80} or \emph{$r$-divisions} \cite{Frederickson87,Goodrich95}, the random sampling approach can indeed achieve optimal $O(n + h\log h)$
expected running time.  The reason the $\log^*n$ factor is avoidable is that we can
recurse for a constant number of levels (in fact, just once) and then switch to
Chazelle's algorithm (or more precisely Bar-Yehuda and Chazelle's algorithm, which calls
Chazelle's) in the base case when the input size is sufficiently small---the $h\log h$ term is big
enough to absorb the cost of the switch.  The idea of combining sampling with $r$-divisions and invoking Chazelle's in the base case is not original and can be viewed as
a variant of an existing randomized algorithm by
Eppstein, Goodrich, and Strash~\cite{EppsteinGS10} for a related problem
(see below).  Still, it is interesting that this implication has not been noted earlier, and we include this optimal randomized solution in Appendix~\ref{app:rand}.

Our main contribution is in obtaining an optimal deterministic algorithm.
To do so, we combine ideas from both the random sampling approach and Bar-Yehuda and Chazelle's deterministic approach, even though the two approaches appear different and incompatible.
To derandomize the sampling approach, we need to compute a subset of the input edges that ``behave'' like a random sample,
known as an \emph{$\eps$-net} in the literature~\cite{HausslerW87,Matousek00} (with $\eps$ being $\frac{\polylog n}{n}$ in our case).
Constructing a net deterministically appears challenging, especially since we can only afford linear time.
We observe a simple way to construct a net in linear time when the trapezoidal decomposition is given, 
by using planar graph separators or $r$-divisions (the net size is suboptimal, but is good enough for our purposes).  

However, there is an obvious circularity issue: we are not given the trapezoidal decomposition, since that is what we want to compute!  Our key idea is to go back to Bar-Yehuda and Chazelle's interval-tree approach, which gives a partition of the input into $O(\log n)$ subsets where we already know how to compute the trapezoidal decomposition of each subset.  Although merging their trapezoidal decompositions is not easy, merging their nets is:
the union of the nets of the subsets is a net for the whole input (the net size is worsen by an $O(\log n)$ factor, but is still good enough for our purposes).   By building the trapezoidal decomposition of the net
in combination with an $r$-division, we can then divide the problem into subproblems of polylogarithmic size.  The detailed description of the whole algorithm is given in Section~\ref{sec:disjoint}, and is just about 2 pages long.

\paragraph{When there are intersections.}\label{sec:intro:intersect}
In the last part of the paper, we address an extension of the problem: computing the trapezoidal decomposition of
polygonal chains that may not be simple, i.e., may self-intersect, and may intersect each other.
The decomposition is defined in the same way, except that we also add vertical segments from
each intersection point to the edge immediately above/below.
The crux of the problem lies in the computation of the intersections.  (Once all the intersection points are known, we could actually reduce to the non-intersecting case to obtain the final trapezoidal decomposition.)

The problem of computing all self-intersections of one chain (the $h=1$ case) is already challenging.
Chazelle~\cite{Chazelle91} posed the (still open) question of whether there is an $O(n+X)$-time algorithm, where $X$
denotes the number of intersections.
Known line-segment intersection algorithms~\cite{ClarksonS89,Mulmuley90,ChazelleE92,Balaban95} can solve the problem in $O(n\log n + X)$ time, which is optimal when $X\gg n\log n$ but not when $X$ is sublinear; they do not exploit the fact that 
the input line segments form a connected chain.
Clarkson, Cole, Tarjan~\cite{ClarksonCT92} showed that the random sampling approach yields
an $O(n\log^* n + X)$ expected time bound for this problem.
On the other hand,
Eppstein, Goodrich, and Strash~\cite{EppsteinGS10} used random sampling in combination with $r$-divisions to
obtain an alternative $O(n + X\log^{(c)}n)$ expected time bound for an arbitrarily large constant~$c$, where
$\log^{(c)}$ denotes the logarithm function iterated $c$ times.
Thus, their algorithm runs in optimal linear expected time when $X \ll n/\log^{(c)}n$.
We are not aware of any \emph{deterministic} result better than $O(n\log n+X)$, however.

We present a deterministic algorithm for computing the trapezoidal decomposition of
a non-simple polygonal chain with $X$ intersections in $O(n+Xn^\eps)$ time.
Thus, our algorithm runs in optimal linear time when $X\ll n^{1-\eps}$.
(This case is already sufficient for various applications; for example,
Eppstein and Goodrich~\cite{EppsteinG08} provided empirical evidence that road networks tend to
have $X\approx \sqrt{n}$ intersections.)  

More generally, when there are multiple chains ($h>1$),
Clarkson, Cole, and Tarjan~\cite{ClarksonCT92} obtained an $O(n\log^*n + h\log h +X)$ expected time bound.
Our deterministic algorithm, presented in
Section~\ref{sec:intersect}, takes $O(n + h\log h + Xn^\eps)$ time.
(In Appendix~\ref{app:rand}, we also note an extension of Eppstein, Goodrich, and Strash's randomized algorithm with $O(n + h\log h + X\log^{(c)}n)$ expected time bound.)

\section{Preliminaries}

We begin with some standard definitions.

\paragraph{Trapezoidal decomposition.}
For a set $S$ of $n$ line segments in the plane, the \emph{trapezoidal decomposition}, denoted $\TD(S)$,
is the subdivision of the plane formed by $S$ together with vertical line segments drawn from
each vertex $v$ of the arrangement (i.e., each endpoint and each intersection point) to the segment 
immediately above and below $v$.  The faces in this subdivision are indeed trapezoids (or triangles)
where the left and right sides (if exist) are vertical and the top and bottom sides (if exist) are
parts of the input segments of $S$.
If $S$ is disjoint, $\TD(S)$ has $O(n)$ trapezoids and can be constructed in $O(n\log n)$ time~\cite{BergCKO08,PreparataS85}.  If $S$ has $X$ intersections, $\TD(S)$ has $O(n+X)$ trapezoids and
can be constructed in $O((n+X)\log n)$ time by the classic Bentley--Ottmann sweep~\cite{BentleyO79,PreparataS85}
or in $O(n\log n + X)$ time by the fastest known algorithms~\cite{ClarksonS89,Mulmuley90,ChazelleE92}.

The trapezoidal decomposition of a collection $P$ of (not necessarily simple) polygonal chains,
denoted $\TD(P)$, is defined as the trapezoidal decomposition of its edge set.

\paragraph{$r$-division.}
We use the following weak definition of $r$-divisions~\cite{Frederickson87}:
For a planar graph $G$ with $n$ vertices, an \emph{$r$-division} is a
subset $B$ of $O(n/\sqrt{r})$ vertices such that every connected component in $G-B$ has
size at most $r$.
An $r$-division can be viewed as an extension of planar graph separators, and
can be constructed in $O(n)$ time by known algorithms~\cite{Goodrich95,KleinMS13} for any given $r$. 

In our application, we consider an $r$-division of the dual\footnote{In the dual graph, the vertices are the trapezoids, and two trapezoids are adjacent if their boundaries (vertical sides or parts of input line segments) overlap.}
 of a trapezoidal decomposition.
Here, $B$ is a subset of trapezoids.
Construct the subdivision of the plane formed by the sides of the trapezoids in $B$.
The cells of this subdivision will be referred to as the \emph{regions} of the $r$-division in this paper.
Each region is a polygon possibly with holes, which is the union of at most $r$ trapezoids (a region could
also just be a single trapezoid, for example, when it is from $B$).  The total number of sides in all the regions\footnote{
We use sides instead of edges to avoid confusion between geometric and graph terminologies.
} is $O(n/\sqrt{r})$.
(With a stronger version of $r$-divisions~\cite{Frederickson87,KleinMS13}, it is possible to ensure that there are $O(n/r)$
regions, each having at most $O(\sqrt{r})$ sides and at most $O(1)$ holes, though this will not be necessary for our application.)

\paragraph{Interval tree.}
For a set ${\cal I}$ of $h$ intervals, the \emph{interval tree}~\cite{BergCKO08,PreparataS85}
is a binary tree defined recursively as follows.
If ${\cal I}=\emptyset$, the tree is just a single node.
Otherwise, let $a$, $b$, $m$ be the minimum, maximum, and median endpoint.
At the root, store $[a,b]$, which we call the \emph{range}, and $m$, which we call the \emph{pivot value}.
At the root, also store all intervals of ${\cal I}$ containing $m$.
Recursively build the interval tree for the subset of all intervals of ${\cal I}$ completely left of $m$ (this is the left subtree)
and the interval tree for the subset of all intervals of ${\cal I}$ completely right of $m$ (this is the right subtree).
The resulting tree has $O(\log h)$ height, and can be constructed in $O(h\log h)$ time.  
Note that each interval of ${\cal I}$ is stored in exactly one node.

\section{Computing the trapezoidal decomposition of $h$ disjoint simple polygonal chains}\label{sec:disjoint}

%
%
%
%

We now present our main deterministic algorithm for computing the
 trapezoidal decomposition of $h$ disjoint simple polygon chains (from which 
 an algorithm for triangulating a polygon with $h$ holes would follow).

\begin{theorem}\label{thm:disjoint}
Given a collection $P$ of $h$ disjoint simple polygonal chains with a total of $n$ vertices,
we can compute $\TD(P)$ in $O(n + h\log h)$ deterministic time.
\end{theorem}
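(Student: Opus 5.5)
We follow the outline in the introduction: build a net deterministically from the interval-tree levels, use its trapezoidal decomposition together with an $r$-division to cut the problem into polylogarithmic-size pieces, and finish each piece with Bar-Yehuda and Chazelle's algorithm.

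\textbf{Step 1: interval tree and per-level decompositions.} Build the interval tree on the $x$-spans of the $h$ chains in $O(h\log h)$ time. It has $O(\log h)$ levels. Within one level, the chains stored at different nodes are vertically separated by the ranges. The chains stored at a single node all cross the common vertical line $x=m$, so Jordan sorting~\cite{HoffmanMRT86} orders them along this line in linear time. We then join them into one connected component with vertical segments and run Chazelle's algorithm. This gives $\TD(P_i)$ for the subset $P_i$ at each level $i$ in $O(n_i)$ time, plus the $h\log h$ term. Summed over levels, this is $O(n+h\log h)$.

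\textbf{Step 2: a net from each $\TD(P_i)$, then the union.} Fix a parameter $r=\log^c n$. Compute an $r$-division of the dual graph of $\TD(P_i)$ in linear time. Let $R_i$ be the set of input edges bounding trapezoids in the boundary set $B$; it has $O(n_i/\sqrt r)$ edges. The net property we want is the following. Take any trapezoid $\Delta$ of $\TD(R_i)$, or more generally any segment or trapezoid not crossed by $R_i$. Then $\Delta$ lies in $O(1)$ regions of the $r$-division, possibly after splitting it along the vertical walls of $B$ that it crosses. Hence $\Delta$ meets at most $O(r)$ edges of $P_i$. We must check this carefully: the boundary walls that $\Delta$ crosses are vertical sides of trapezoids in $B$, and each such side ends on an edge of $R_i$.

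Now set $R=\bigcup_i R_i$. This has size $O(n\log h/\sqrt r)=O(n/\log^2 n)$ for $c$ large enough. Each trapezoid of $\TD(R)$ is free of every $R_i$, so it meets $O(r\log h)$ edges of $P$ in total.

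\textbf{Step 3: dividing into subproblems.} Compute $\TD(R)$ in $O(|R|\log n)=O(n)$ time by a sweep, or by a recursive call. Take an $r'$-division of $\TD(R)$ with $r'=\mathrm{polylog}\, n$. Its regions have $O(|R|/\sqrt{r'})$ sides in total, and each region is a union of at most $r'$ trapezoids. We then locate every input edge in these regions, which is point location for chains. Walk along each chain through $\TD(R)$: once the first vertex of a chain is located, each step crosses into an adjacent trapezoid. The total number of crossings is bounded by the sum of conflict lists, which is $O(n\,\mathrm{polylog}\,n/\ldots)$. This bound needs checking: crossing walls costs $O(1)$ per crossing, but a trapezoid with high degree could be expensive. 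The $h$ starting vertices are located with a point-location structure on $\TD(R)$ in $O(h\log n)$ time. If $h$ is not $O(n/\log n)$, we first handle the regime $h\ge n/\log n$ directly with Bar-Yehuda and Chazelle's algorithm, since in that regime $O(n+h\log^{1+\eps}h)$ is off only in lower-order terms. This step is delicate, and we may instead need to locate the starting points only in the coarser structure.

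\textbf{Step 4: solving the pieces and merging.} Clip each chain to each region. A piece of $P$ inside one region has size $s=O(r'r\log h)=\mathrm{polylog}\,n$ and $h_j$ chain fragments. The number of fragments is bounded by the number of crossings of region sides, $O(n/\log n)$ in total. Solve each piece with Bar-Yehuda and Chazelle's algorithm in $O(s+h_j\log^{1+\eps}h_j)=O(s+h_j(\log\log n)^{2})$ time. Summed over all pieces, this is $O(n)$. Finally, glue the decompositions across region sides in linear time; the region boundaries are vertical walls or net edges, so the vertical extensions stitch together locally.

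\textbf{Main obstacle.} The hardest part will be Step 3, together with the net property in Step 2. The net property must survive the non-convexity and holes of the regions, and the chain walking must achieve linear total cost while keeping the dependence on $h$ at $h\log h$.
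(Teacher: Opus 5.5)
There is a genuine gap, and it sits exactly where you flagged uncertainty. Step~4 assumes that every region of the $r'$-division carries only $\polylog n$ edges. That assumption rests on your Step~2 net property for \emph{trapezoids} free of $R_i$, which is false for your construction.

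A counterexample: take $h=\Theta(n)$ constant-size chains with pairwise disjoint $x$-spans.
\begin{itemize}
\item Each interval-tree node stores at most one chain.
\item Within a level, consecutive chains are separated in $\TD(P_i)$ by full-height vertical slabs that have no top or bottom edge.
\item An $r$-division may legitimately put only such slabs into $B$, say every $\Theta(r)$-th one. Your claim that each wall of $B$ ends on an edge of $R_i$ fails here.
\item Then every $R_i$ is empty, $\TD(R)$ is the whole plane, and your last step is just Bar-Yehuda and Chazelle's algorithm on the entire input, at cost $O(n\log^{1+\eps}n)$.
\end{itemize}
Two further problems:
\begin{itemize}
\item The chain walk in Step~3 has no deterministic cost bound, since both the conflict lists and the trapezoid degrees in $\TD(R)$ are uncontrolled.
\item Your fallback for $h\ge n/\log n$ is wrong. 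There $h\log h=\Omega(n)$ is the dominant term, and $h\log^{1+\eps}h$ exceeds it by a $\log^\eps h$ factor, which is not lower order. The fallback is also unnecessary, because $O(h\log n)=O(n+h\log h)$ always.
\end{itemize}

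The paper's proof rests on two ideas that your proposal lacks.

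First, the net property is only needed, and only claimed, for \emph{vertical} segments, where it is immediate:
\begin{itemize}
\item A vertical segment avoiding $R_v$ lies in a single region of the $s$-division of $\TD(P_v)$. That region is found by point location and contributes $O(s)$ edges.
\item The segment's $x$-coordinate lies in the ranges of only $O(\log h)$ nodes.
\item Vertical segments suffice because every non-vertical side of a region of the $t$-division of $\TD(R)$ lies on an edge of $R$, which is an edge of $P$. Since $P$ is disjoint, no chain crosses such a side.
\end{itemize}
So you compute the intersections of the $O(|R|/\sqrt t)$ vertical region sides with $P$ directly and cut the chains there. No walking is needed.

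Second, add a zero-length segment at each of the $h$ chain start vertices to $R$:
\begin{itemize}
\item Each region then contains $O(t)$ start vertices, so it receives only $O(ts\log h)$ chain fragments.
\item The region's vertex count need not be polylogarithmic. Bar-Yehuda and Chazelle's algorithm runs in $O(n'_\gamma+h'_\gamma\log^{O(1)}h'_\gamma)$ time, which is linear in $n'_\gamma$.
\end{itemize}
With $|R|=O(n/\sqrt s+h)$, $s=\log^2n$ and $t=\log^8n$, everything sums to $O(n+h\log n)=O(n+h\log h)$.
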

\begin{proof}
Let $s$ and $t$ be parameters to be set later.
Our algorithm is described in a series of steps below.
The purpose of Steps 0--2 is to compute a subset $R$ satisfying a certain property.
Afterwards, Steps 3--5 use its trapezoidal decomposition $\TD(R)$ to divide the problem into smaller subproblems and then solve these subproblems.

\begin{enumerate}
\item[\LIPICSv{\bf\textcolor{lipicsGray}} 0.]
For each chain $p\in P$, let $I_p$ be the interval determined by the smallest
and largest $x$-coordinate in~$p$; these can be generated in $O(n)$ time.
Build the interval tree for these $O(h)$ intervals in $O(h\log h)$ time.
For each node $v$ of the tree, let $P_v$ be the list of all chains $p\in P$ whose
intervals $I_p$ are stored at~$v$.  Note that each chain appears in exactly one list.


\LIPICSv{\medskip}
\item
For each node $v$ of the interval tree with pivot value $m_v$:
\LIPICSv{\smallskip}
\begin{enumerate}
\item Sort the intersections of $P_v$ with the vertical line $x=m_v$ (note that
all chains in $P_v$ intersect this line).
This is an instance of the \emph{Jordan sorting} problem for multiple disjoint polygonal chains.
Bar-Yehuda and Chazelle~\cite{Bar-YehudaC94} showed that this can be done in $O(n_v+h_v\log h_v)$ time (by
adapting a known Jordan sorting algorithm for a single polygonal chain~\cite{HoffmanMRT86}), where
$n_v$ and $h_v$ are the number of vertices and chains in $P_v$ respectively.
Note that $\sum_v n_v = n$ and $\sum_v h_v = h$.

\LIPICSv{\smallskip}
\item 
Join all chains in $P_v$ into a single simple, connected component $P'_v$
by adding vertical line segments between all pairs of consecutive intersection points along $x=m_v$, by using the output of step 1(a). (Not all such vertical segments are needed to connect the chains, as shown in Figure~\ref{fig:jordan:sort} (left), but this would not matter.)  Apply Chazelle's algorithm~\cite{Chazelle91} to
compute $\TD(P'_v)$, and thus $\TD(P_v)$, in $O(n_v)$ time.

\LIPICSv{\smallskip}
\item
Compute an $s$-division of the dual of $\TD(P_v)$ in $O(n_v)$ time.
Preprocess the regions of the $s$-division for point location~\cite{PreparataS85} in linear time.
Let $R_v$ be the subset of all $O(n_v/\sqrt{s})$ edges 
of $P_v$ that participate in the region boundaries in this $s$-division (this excludes vertical sides).
Then $R_v$ obeys the following property:\footnote{
This property basically says that $R_v$ is an \emph{$O(s/n_v)$-net}~\cite{HausslerW87,Matousek00} of the edge set of $P_v$, in the set system corresponding to edges intersecting vertical line segments.
The size of our net ($O(n_v/\sqrt{s})$) isn't the smallest possible (which would be $O(n_v/s)$) but is good enough for our purposes.
}
\LIPICSv{\smallskip}
\begin{quote}
($\ast$)\ \
Given any vertical line segment $e$ not intersecting $R_v$, the segment $e$ can intersect at most $O(s)$ edges of $P_v$; furthermore, these edges can be found in $O(\log n + s)$ time.
\end{quote}
\LIPICSv{\smallskip}
This property is easy to see: any such segment $e$ must lie completely inside a region in the
$s$-division, which can be found by point location~\cite{PreparataS85} in $O(\log n)$ time, and there are at most $O(s)$ edges of $P_v$ participating in the region.
\end{enumerate}

\LIPICSv{\medskip}
\item
Let $R$ be the union of all the $R_v$'s, together with $O(h)$ extra line segments of zero length 
at the start vertices of the chains of $P$.  Note that $|R|=O(n/\sqrt{s} + h)$,
and $R$ satisfies the following property:\footnote{
This property basically says that $R$ is an $O((s\log h)/n)$-net of the edge set of $P$ (ignoring the
zero-length segments).
}
\LIPICSv{\smallskip}
\begin{quote}
($\dag$)\ \
Given any vertical line segment $e$ not intersecting $R$, the segment $e$ can intersect at most $O(s\log h)$ edges of $P$; furthermore, these edges can be found in $O((\log n + s)\log h)$ time.
\end{quote}
\LIPICSv{\smallskip}
This follows immediately from property ($\ast$), since the $x$-coordinate of a vertical segment $e$ can lie in the ranges
of only $O(\log h)$ nodes $v$ of the interval tree.

\begin{figure}
\centering
\PAPER{\includegraphics[scale=0.7]{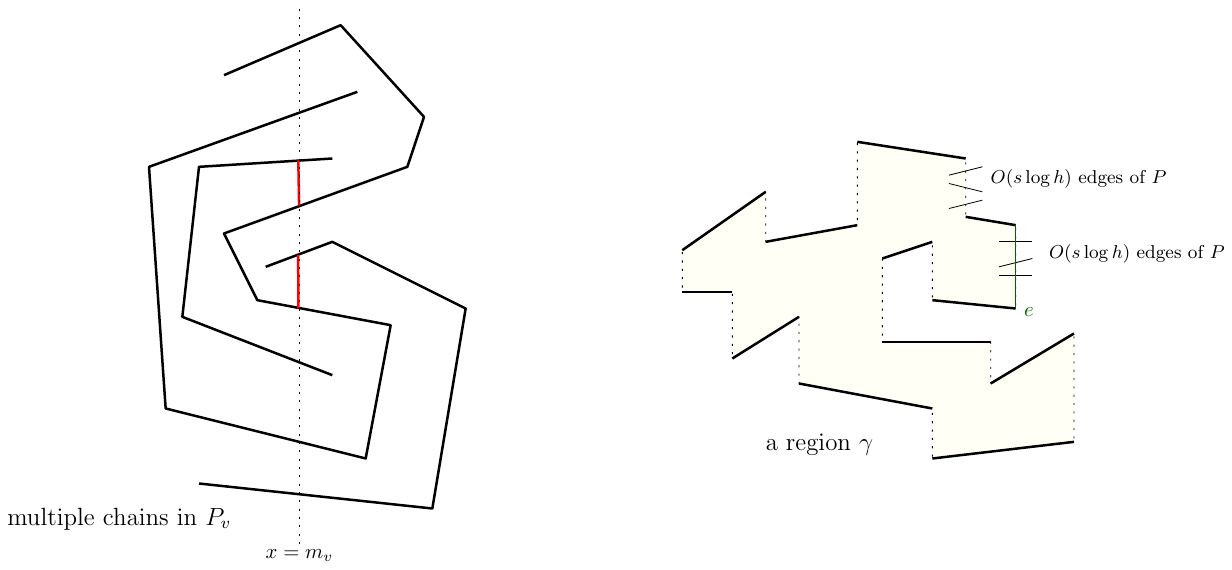}}
\LIPICSv{\includegraphics[scale=0.6]{fig_jordan_sort.pdf}}
\caption{(Left) In Step~1b, we can join multiple chains in $P_v$ intersecting a vertical line $x=m_v$ into a single component, for example, by adding the red vertical segments shown.  (Right) In Step~4, we find the $O(s\log h)$ edges of $P$ intersecting each vertical side $e$ of a region $\gamma$; the non-vertical sides are parts of edges of~$R$.
Each region $\gamma$ is the union of $O(t)$ trapezoids of $\TD(R)$, and the total number of sides over all regions $\gamma$ is $O(|R|/\sqrt{t})$.}\label{fig:jordan:sort}
\end{figure}

\LIPICSv{\medskip}
\item
Compute $\TD(R)$ in $O(|R|\log|R|)$ time, and
compute a $t$-division of the dual of $\TD(R)$ in $O(|R|)$ time.

\LIPICSv{\medskip}
\item
For each region $\gamma$ in the $t$-division and
each vertical side $e$ of the boundary $\partial\gamma$,
compute all intersections of $e$ with $P$.
Since the interior of $e$ does not intersect $R$, we know that $e$ has at most $O(s\log h)$ intersections with $P$  by property ($\dag$), and
they can be found in $O((\log n + s)\log h)$ time.  (See Figure~\ref{fig:jordan:sort} (right).)
By summing over all $O(|R|/\sqrt{t})$ choices of $e$, there are at most $O((|R|/\sqrt{t})\cdot s\log h)$ such intersections in total
and they can be found in $O((|R|/\sqrt{t})\cdot (\log n+s)\log h)$ time.

\LIPICSv{\smallskip}
Cut the chains of $P$ at all these intersection points to obtain a new collection $P'$ of polygonal chains.
In order to cut,
we may need to sort the intersections along each edge of $P$, but this costs at most a logarithmic factor
per intersection, i.e., $O((|R|/\sqrt{t})\cdot s\log h\cdot \log n)$ in total.
The number of vertices in $P'$ is $n'\le n + O((|R|/\sqrt{t})\cdot s\log h)$,
and the number of chains in $P'$ is $h'\le h + O((|R|/\sqrt{t})\cdot s\log h)$.

\LIPICSv{\medskip}
\item 
For each region $\gamma$ in the $t$-division,
let $P'_\gamma$ be the chains of $P'$ inside $\gamma$,
and compute $\TD(P'_\gamma)$ by applying Bar-Yehuda and Chazelle's previous algorithms~\cite{Bar-YehudaC94}
in $O(n'_\gamma+h'_\gamma\log^{O(1)}h'_\gamma)$ time, where $n'_\gamma$ and $h'_\gamma$ are the number
of vertices and chains in $P'_\gamma$ respectively.\footnote{
We could avoid Bar-Yehuda and Chazelle's algorithm and use some slower algorithm
running in $O(n'_\gamma + (h'_\gamma)^{O(1)})$ time, if we bootstrap one more time.
}
Note that $\sum_\gamma n'_\gamma = n'$ 
and $\sum_\gamma h'_\gamma = h'$.
Furthermore, $\max_\gamma h'_\gamma\le O(t\cdot s\log h)$ (since there are at most $O(t)$ sides in $\gamma$, and at most $O(t)$ start vertices of chains inside $\gamma$ due to those extra segments of zero length added to $R$).

\LIPICSv{\smallskip}
Afterwards, we can piece together $\TD(P'_\gamma)$ (after clipping to $\gamma$) over all regions $\gamma$,
to obtain $\TD(P)$ in $O(n')$ additional time.
\end{enumerate}

The total running time for Steps 0--1 is $O(n+h\log h)$.
The running time for Step 3 is $O(|R|\log|R|)=O((n/\sqrt{s} + h)\log n)$ time, as $|R|=O(n/\sqrt{s}+h)$.
The total running time for Steps 4--5 is at most
$O(n + ((|R|/\sqrt{t})\cdot s\log h\cdot \log n) + (h + (|R|/\sqrt{t})\cdot s\log h)\log^{O(1)}(t\cdot s\log h))$.

Setting $s=\log^2n$ and $t=\log^8n$ gives an overall time bound of
$O(n + h\log n + h (\log\log n)^{O(1)}) = O(n + h\log n)$, which is the same as $O(n + h\log h)$ (since the second term dominates only if $h\ge n/\log n$, but then $\log n=O(\log h)$).
\end{proof}

As a corollary, we can test whether a collection $P$ of $h$ polygonal chains has
any intersection in $O(n + h\log h)$ deterministic time: simply run the above algorithm to attempt
to compute $\TD(P)$, and run Chazelle's linear-time algorithm~\cite{Chazelle91} to check simplicity of the resulting (connected) diagram.


\section{When there are a sublinear number of intersections}\label{sec:intersect}

In this section, we consider the more general case when the polygonal chains may self-intersect or intersect
each other, with sublinear number of intersections.
Extra ideas are needed.  In particular, the interval-tree idea is no longer applicable, and edges may now cross
non-vertical sides of the boundary of a region.

In the following lemma, we begin with a special case where we have $b$ subcollections of chains with no intersections within each subcollection.

\begin{lemma}\label{lem:intersect}
Let $P$ be a collection of $h$ simple polygonal chains with a total of $n$ vertices.
Suppose $P$ is partitioned into $b$ subcollections $P_1,\ldots,P_b$ such that there are no
intersections within each subcollection~$P_i$.
Then we can compute $\TD(P)$ in $O(n + h\log h + b^{O(1)}X\log^{O(1)}n)$ deterministic time, where $X$ is the  number of intersections.
\end{lemma}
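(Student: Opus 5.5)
We mimic the proof of Theorem~\ref{thm:disjoint}, now treating each subcollection $P_i$ as the ``disjoint'' instance. First, since each $P_i$ is a collection of disjoint simple chains, we apply Theorem~\ref{thm:disjoint} to compute $\TD(P_i)$ in $O(n_i+h_i\log h_i)$ time, for a total of $O(n+h\log h)$. On each $\TD(P_i)$ we build an $s$-division and preprocess its regions for point location. We let $R_i$ be the edges of $P_i$ on region boundaries, so $|R_i|=O(n_i/\sqrt{s})$. As in ($\ast$), any segment (now \emph{any} line segment, not just vertical) that avoids $R_i$ lies inside a single region. Such a segment therefore meets only $O(s)$ edges of $P_i$, and these can be found in $O(\log n+s)$ time. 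We set $R=\bigcup_i R_i$ together with zero-length segments at chain start vertices, giving $|R|=O(n/\sqrt{s}+h)$. A segment avoiding $R$ meets $O(bs)$ edges of $P$ and these are found in $O(b(\log n+s))$ time; this replaces ($\dag$), with $b$ playing the role of $\log h$.

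The new difficulty is that $R$ itself may have intersections, and edges of $P$ may cross non-vertical sides of regions. We compute $\TD(R)$ by a standard output-sensitive segment intersection algorithm in $O(|R|\log|R|+X_R)$ time, where $X_R\le X$. This is affordable with $s=\log^2 n$. We then take a $t$-division of the dual of $\TD(R)$; regions now have sides that are vertical segments or pieces of edges of $R$. For each vertical side $e$ we find its $O(bs)$ crossings with $P$ via the property above, exactly as in Step~4. For non-vertical sides, an edge of $P$ crossing a side lying on an edge $r\in R_i$ must come from some $P_j$ with $j\neq i$, so every such crossing is one of the $X$ intersections. To find them, we locate the region of the $s$-division of each $P_j$ containing pieces of $r$: walking along $r$ through the $s$-division of $P_j$ costs $O(s)$ per crossed region boundary plus point location. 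Each boundary crossing is again an intersection of $r$ with $P_j$. Hence all intersections of edges of $R$ with $P$ are enumerated in $O(b(|R|\log n)+bsX\log n)$ time, i.e., within the $b^{O(1)}X\log^{O(1)}n$ budget after charging.

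We then cut the chains of $P$ at all these points to get $P'$. Each region $\gamma$ now contains chains whose total number is $O(t\cdot bs+(\text{intersections on }\partial\gamma))$. Within $\gamma$, the pieces may still intersect one another, but these intersections are among the $X$ and occur only between different subcollections. We solve each subproblem recursively or by brute force, for instance Bentley--Ottmann on the $h'_\gamma$ chains after first computing each $\TD(P_i\cap\gamma)$ with Bar-Yehuda--Chazelle. A cleaner alternative is to recurse using the same lemma with smaller $n$, which yields the polylog factors. The cost is $O(n'_\gamma+h'_\gamma\log^{O(1)}h'_\gamma+X_\gamma\log n)$, and summing with $s=\log^2n$, $t=\log^8 n$ gives the claimed bound.

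The main obstacle is the handling of edges crossing non-vertical region sides and intersections inside a region. A region $\gamma$ is no longer free of input intersections, so the base case must itself be output-sensitive. Step~5 must also be charged carefully so that each of the $X$ intersections pays only $b^{O(1)}\log^{O(1)}n$, even when a single intersection point causes many chain pieces in $\gamma$. I would first try to bound the extra chains in $\gamma$ by the number of intersections on $\partial\gamma$. I would then solve $\gamma$ by merging the $b$ already-known decompositions $\TD(P_i)$ restricted to $\gamma$, pairwise and via walking, at cost $O(b\cdot(\text{size})+b^2X_\gamma\log n)$.
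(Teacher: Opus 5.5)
Your proposal has a genuine gap at the two points where the intersecting case departs from Theorem~\ref{thm:disjoint}.

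First, your extension of ($\ast$) to arbitrary segments is false. The region boundaries of an $s$-division of $\TD(P_i)$ contain not only pieces of edges of $R_i$ but also vertical sides of the separator trapezoids. A non-vertical segment can cross those vertical sides without touching $P_i$. The paper states ($\ast$) only for vertical segments because a vertical segment cannot cross a vertical side. Your treatment of non-vertical region sides therefore breaks: you walk $r\in R_i$ through the $s$-division of each $P_j$ and charge every boundary crossing to an intersection of $r$ with $P_j$. But $r$ can run along a long corridor of the free space of $P_j$, crossing many vertical walls that bound $s$-division regions while meeting no edge of $P_j$. So the walk is not bounded by $O(b|R|\log n+bsX\log n)$, and it can be expensive even when $X=0$. (Separately, with $s=\log^2 n$ your point-location term $b|R|\log n$ alone can reach order $bn$, and $t$ must depend on $b$; the paper uses $t=b^2\log^6 n$.)

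Second, you never actually solve the subproblems inside regions. You flag this as the main obstacle and leave it open, and none of the options you list meets the bound:
\begin{itemize}
\item Bentley--Ottmann pays a logarithmic factor on every vertex, not just on intersections. Moreover $n'_\gamma$ is not polylogarithmically bounded here; only the number of chains $h'_\gamma$ is.
\item Recursing on the lemma pays the $O(n)$ term at each of roughly $\log^*$ levels, unless it bottoms out in an output-sensitive linear-time base case, which is exactly what is unavailable.
\item ``Merging the $b$ decompositions via walking'' hits the same non-output-sensitive walking problem, and its stated $O(b\cdot\mathrm{size})$ cost already sums to $O(bn)$.
\end{itemize}

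The idea you are missing is that nothing needs to be computed on non-vertical sides or inside intersecting regions directly.
\begin{itemize}
\item The paper cuts chains only at vertical sides.
\item For each region $\gamma$ it uses Theorem~\ref{thm:disjoint} merely as an intersection \emph{test} on $P'_\gamma\cup Q_\gamma$, i.e.\ the chains starting in $\gamma$ together with the non-vertical sides of $\partial\gamma$.
\item Regions passing the test are good: their chains stay inside $\gamma$, and good chains are pairwise disjoint.
\item Failing regions are charged to distinct intersections, so there are $O(X)$ bad regions and $O(bstX)$ bad chains.
\item Every chain is first split into subchains of length $\ell=O(\log n)$, adding only $O(n/\log n)$ chains. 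So the bad chains have total size $O(\ell\, bstX)$ and can be handled by brute force. Each bad chain is traced through the regions; each non-vertical side it crosses is an intersection. It is then checked against the good chains of every good region it enters. Finally, Bentley--Ottmann is run among the bad chains only.
\item Once all intersections are known, Theorem~\ref{thm:disjoint} produces $\TD(P)$.
\end{itemize}
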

\begin{proof}
Let $\ell$ be the maximum number of vertices per chain in $P$.
Let $s$ and $t$ be parameters to be set later.
We describe an algorithm below to compute all intersections of $P$.
Steps 1--2 compute a subset $R$ satisfying a certain property.
Afterwards, Steps 3--5 use its trapezoidal decomposition $\TD(R)$ to divide the problem into smaller subproblems and then solve these subproblems.

\begin{enumerate}
\item
For each $i\in\{1,\ldots,b\}$:
\LIPICSv{\smallskip}
\begin{enumerate}
\item Apply Theorem~\ref{thm:disjoint} to compute $\TD(P_i)$
in $O(n_i+h_i\log h_i)$ time, where $n_i$ and $h_i$ are the number of vertices and chains in $P_i$
respectively.
\LIPICSv{\smallskip}
\item Compute an $s$-division of $\TD(P_i)$ in $O(n_i)$ time.
Preprocess the regions of the $s$-division  for point location~\cite{PreparataS85} in linear time.
Let $R_i$ be the subset of all $O(n_i/\sqrt{s})$ edges 
of $P_i$ that participate in the boundaries of regions in this $s$-division.
\end{enumerate}

\LIPICSv{\medskip}
\item
Let $R$ be the union of all the $R_i$'s, together with $O(h)$ extra line segments of zero length 
at the start vertices of the chains of $P$.  Note that $|R|=O(n/\sqrt{s} + h)$,
and $R$ satisfies the following property, similar to before:
\LIPICSv{\smallskip}
\begin{quote}
($\dag$)\ \ Given any vertical line segment $e$ not intersecting $R$, the segment $e$ can intersect at most $O(bs)$ edges of $P$; furthermore, these edges can be found in $O(b(\log n + s))$ time.
\end{quote}

\LIPICSv{\smallskip}
\item
Compute $\TD(R)$ in $O(|R|\log|R|+X)$ time, and
compute a $t$-division of $\TD(R)$ in $O(|R|+X)$ time.

\LIPICSv{\medskip}
\item
For each region $\gamma$ in the $t$-division and each vertical side $e$ of the boundary $\partial\gamma$,
compute all intersections of $e$ with $P$.
Since the interior of $e$ does not intersect $R$, there are at most $O(bs)$ such intersections and
they can be found in $O(b(\log n + s))$ time by property~($\dag$).
By summing over all $O((|R|+X)/\sqrt{t})$ choices of $e$, there are at most $O(((|R|+X)/\sqrt{t})\cdot bs)$ such intersections in total
and they can be found in $O(((|R|+X)/\sqrt{t})\cdot b(\log n+s))$ time.

\LIPICSv{\smallskip}
Cut the chains of $P$ at all these intersection points to obtain a new collection $P'$ of polygonal chains.
In order to cut,
we may need to sort the intersections along each edge of $P$, but this costs at most a logarithmic factor
per intersection, i.e., $O(((|R|+X)/\sqrt{t})\cdot bs\cdot \log n)$ in total.
The number of vertices in $P'$ is $n'\le n + O(((|R|+X)/\sqrt{t})\cdot bs)$,
and the number of chains in $P'$ is $h'\le h + O(((|R|+X)/\sqrt{t})\cdot bs)$.

\LIPICSv{\medskip}
\item
For each region $\gamma$ in the $t$-division,
let $P'_\gamma$ be the set of all chains of $P'$ with start vertices in $\gamma$,
let $Q_\gamma$ be the set of all non-vertical sides of $\partial\gamma$,
and apply 
Theorem~\ref{thm:disjoint} 
to test whether $P'_\gamma\cup Q_\gamma$ has any intersection
in $O(n'_\gamma + (h'_\gamma+|Q_\gamma|)\log n)$ time, where
$n'_\gamma$ and $h'_\gamma$ are the number of vertices and chains in $P'_\gamma$ respectively.
Note that $\sum_\gamma n'_\gamma = n'$ and $\sum_\gamma h'_\gamma=h'$ and $\sum_\gamma |Q_\gamma| = O((|R|+X)/\sqrt{t})$. 
If the answer is no, mark the region $\gamma$ \emph{good} and mark all the chains in $P'_\gamma$  \emph{good}; otherwise,
mark the region $\gamma$ \emph{bad} and mark all the chains in $P'_\gamma$ \emph{bad}.
If $\gamma$ is good, all chains in $P'_\gamma$ do not intersect each other and stay inside $\gamma$ (since
 chains in $P'$ do not intersect the vertical sides of $\partial\gamma$).
Thus, good chains do not intersect each other.
On the other hand, the number of bad regions is at most $O(X)$, since we can charge each bad region to a 
unique intersection (note that each chain is in just one $P'_\gamma$); so, the number of bad chains is at most $O(bstX)$, since each $P'_\gamma$ has at most $O(bs\cdot t)$ chains.

\LIPICSv{\medskip}
\item
For each bad chain $p'$ in $P'$, find all 
regions $\gamma$ in the $t$-division that are intersected by $p'$.
To do so, begin with the region $\gamma$ containing the start vertex of~$p'$.
Find the next line segment $e\in Q_\gamma$ intersected by $p'$, if exists, by brute force 
in $O(\ell t)$ time, since $p'$ has size at most $O(\ell)$ and $Q_\gamma$ has size at most $O(t)$.  Then reset $\gamma$ to the neighboring region defined by $e$, and repeat.
The total time is at most $O(\ell t\cdot bst X)$, since every region~$\gamma$ visited by $p'$ after the first can be charged to a unique intersection, and there are at most $O(bstX)$ bad chains~$p'$.

\LIPICSv{\medskip}
\item
For each bad chain $p'$ in $P'$ and each good region $\gamma$ intersected by $p'$ (all such regions have been identified from step~6), compute all intersections of $p'$ with
the good chains of $P'_\gamma$ by brute force in $(\ell\cdot \ell bst)$ time (since $P'_\gamma$ has at most 
$O(\ell\cdot bst)$ edges).  The total time is at most $O(\ell\cdot \ell bst\cdot bstX)$.

\LIPICSv{\medskip}
\item Lastly, compute all intersections among the bad chains, which have $n''=O(\ell\cdot bstX)$ edges, e.g., by the Bentley--Ottmann sweep~\cite{BentleyO79} in $O((n''+X)\log n'')$ time.
\end{enumerate}

The total running time for Step 1 is $O(n+h\log h)$.
The running time for Step 3 is $O(|R|\log|R|+X)=O((n/\sqrt{s} + h)\log n+X)$ time.
The total running time for Steps 4--5 is at most
$O(n + ((|R|+X)/\sqrt{t})\cdot bs\cdot \log n + h\log n)$.
The total running time for Steps 6--8 is $O((b\ell s t)^{O(1)}X)$.

Setting $s=\log^2n$ and $t=b^2\log^{6}n$ gives an overall time bound of
$O(n + h\log h + (b\ell)^{O(1)}X\log^{O(1)}n)$.

By initially subdividing the input chains into subchains of length $\log n$, we can lower $\ell$ to $\log n$ while
increasing $h$ to $h+O(n/\log n)$; the $O(h\log h)$ term would then be increased to $O(n+h\log h)$.

After computing all intersections, we can turn $P$ into a collection of at most $h$ disjoint 
simple chains with $O(n+X)$ vertices, by making tiny cuts to eliminate cycles and taking an 
Euler tour of each connected component
(to traverse the components, we may need to sort intersection points along each edge, but this costs at most $O(X\log n)$ time).
Afterwards, we can apply Theorem~\ref{thm:disjoint} to compute $\TD(P)$ in $O(n+X+h\log h)$ additional time.
\end{proof}

Finally, we combine the above lemma with an $n^{\Theta(\eps)}$-way divide-and-conquer to solve the general problem:

\begin{theorem}
Given a collection $P$ of $h$ (not necessarily simple) polygonal chains with a total of $n$ vertices,
we can compute $\TD(P)$ in $O(n + h\log h + Xn^\eps)$ deterministic time
for an arbitrarily small constant $\eps>0$, where $X$ is the number of intersections.
\end{theorem}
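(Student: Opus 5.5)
The plan is to use Lemma~\ref{lem:intersect} with $b=n^{\delta}$ for a small constant $\delta$ depending on $\eps$, inside an $n^{\delta}$-way divide-and-conquer over the list of edges. The lemma needs the input split into $b$ subcollections with no intersections inside any of them. Such a partition is obtained by recursion: compute the intersections inside each subcollection first, then cut the chains at those points so that every subcollection becomes intersection-free.

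First, as at the end of the proof of Lemma~\ref{lem:intersect}, subdivide every chain into subchains of length at most $\log n$. This raises $h$ to $h+O(n/\log n)$, and the term $h\log h$ becomes $O(n+h\log h)$. Next, define a recursive procedure on a collection of chains with $m$ edges in total. Concatenate the chains (in input order) into one sequence of edges and split it into $b$ contiguous blocks of about $m/b$ edges each. A block consists of whole chains plus at most two partial chains, so the total number of chains increases by only $O(b)$ per call. Call the procedure recursively on each block $P_i$ to obtain all intersections within $P_i$. Cut the chains of $P_i$ at these points, sorting along edges at $O(\log n)$ cost per intersection. Then turn each $P_i$ into a collection of disjoint simple chains using the tiny-cuts and Euler-tour trick from the end of the proof of Lemma~\ref{lem:intersect}. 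This keeps the number of chains at most that of $P_i$ and gives $O(m_i+X_i)$ vertices. Finally apply Lemma~\ref{lem:intersect} to the resulting partition into $b$ intersection-free subcollections. It returns $\TD$ of the whole collection, and hence all of its intersections. The recursion stops at blocks of size $n^{\delta}$, where a Bentley--Ottmann sweep costs $O((m+X)\log n)$.

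Because the recursion depth is $O(1/\delta)=O(1)$, the analysis is a sum over a constant number of levels. At each level the blocks partition the edges, so the $O(n)$ terms add up to $O(n)$. The number of chains across a level is at most $h'+O(n/n^{\delta})$, where $h'=h+O(n/\log n)$. The corresponding $h\log h$ terms sum to $O(h\log h + n)$: the part coming from $n/\log n$ or from block splitting contributes $O(n/\log n\cdot\log n)=O(n)$. Each intersection is counted in exactly one call, namely the one at which its two edges are first in the same block. Its cost there is $b^{O(1)}\log^{O(1)}n=n^{O(\delta)}$. The base case adds $O(X\log n)$ plus $O(n^{\delta}\log n)$ per base block, which totals $O(n\log n)$ — too much.

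The real obstacle is this base case and the accounting of the $n\log n$-type terms; I expect it to be the main difficulty. The fix I would try is to let the base case itself use Theorem~\ref{thm:disjoint} rather than a sweep. A block of $n^{\delta}$ edges that has no intersections is processed in linear time. More cleanly, the recursion can go all the way down to blocks of size $1$, and there are still only $O(1/\delta)$ levels because the block size shrinks by a factor $b=n^{\delta}$ per level. Single edges are trivially intersection-free, so every level is handled by Lemma~\ref{lem:intersect} at cost $O(n+h\log h)$ plus $n^{O(\delta)}$ per intersection created at that level. Choosing $\delta$ so that the $n^{O(\delta)}$ factor is at most $n^{\eps}$ gives $O(n+h\log h+Xn^{\eps})$. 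Computing $\TD(P)$ then follows directly, because the top-level call to Lemma~\ref{lem:intersect} returns it.
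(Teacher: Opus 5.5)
Your plan is the paper's own argument. You split into $b=n^{\Theta(\eps)}$ blocks and recurse. You cut each block at its intersections and re-chain it by tiny cuts and Euler tours. You then apply Lemma~\ref{lem:intersect} and use that there are only $O(1/\eps)$ levels.

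The gap is in how you resolve the base-case worry you raised yourself. Recursing down to single edges does not remove the $n\log n$-type cost; it moves it into the Lemma. A call with about $b$ edges, split into $b$ single-edge blocks, hands Lemma~\ref{lem:intersect} a collection of $b$ chains. The Lemma's $h\log h$ term is then $b\log b$, and it is really paid: Step~3 of the Lemma builds $\TD(R)$ with a generic $O(|R|\log|R|)$ algorithm, and $R$ contains a zero-length segment for every chain. Over the $n/b$ such calls this totals $n\log b=\delta n\log n$, even for a single simple chain with $X=0$.

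So your claim that every level costs $O(n+h\log h)$ fails at the bottom level, where the chain count is $n$, not $h'+O(n^{1-\delta})$. In general, a call that splits into $b$ blocks pays $(h_c+b)\log(h_c+b)$, not $h_c\log h_c$. The paper's recurrence glosses over the same point: it charges the Lemma $O(n+h\log h+\cdots)$ even though the Lemma receives up to $h+b$ chains.

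The missing step is to carry out your first idea properly. In each call (or only at base blocks of size about $n^{\delta}$), first test whether the call's input, with $m$ edges and $h_c$ chains, is intersection-free. Use the corollary after Theorem~\ref{thm:disjoint}: run that algorithm and verify its output with Chazelle's algorithm. This costs $O(m+h_c\log h_c)$ and returns $\TD$ directly when the input is intersection-free. At depth $j$ we have $h_c\le n/b^j$ and $\sum_c h_c\le h'+O(b^j)$, so the tests cost $(h'+O(b^j))\log(n/b^j)=O(n+h\log h)$ per level.

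Only calls with $X_c\ge 1$ proceed further. For those calls, two costs can be charged to their intersections:
\begin{itemize}
\item the extra $O(b\log n)$ in the Lemma's bound, charged as $O(X_c\, b\log n)$;
\item a Bentley--Ottmann sweep on a base block, costing $O((n^{\delta}+X_c)\log n)$, which is $O(X_c\, n^{\delta}\log n)$.
\end{itemize}
Both fit in the $X_c\, n^{O(\delta)}$ budget. With this test in place, your level-by-level accounting goes through and gives $O(n+h\log h+Xn^{\eps})$.
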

\begin{proof}
Divide $P$ into $b$ subcollections $P_1,\ldots,P_b$, each with $n/b$ vertices,
possibly by splitting at most $b$ of the input chains.
Suppose $P_i$ has $h_i$ chains and $X_i$ intersections.  Note that $\sum_i h_i\le h+b$ and $\sum_i X_i\le X$.
Recursively compute $\TD(P_i)$, whose output gives us the $X_i$ intersections of each $P_i$.  Subsequently, $P_i$ can be turned into a non-intersecting collection of 
at most $h_i$ simple chains with $O(n/b+X_i)$ vertices.
Now, apply Lemma~\ref{lem:intersect} to compute $\TD(P)$ in
$O(n+X + h\log h + b^{O(1)}X\log^{O(1)}n)$ time.

The total running time satisfies the recurrence
\[ T(n,h,X)\ \le \max_{\scriptsize\begin{array}{c}h_1,\ldots,h_b,\\X_1,\ldots,X_b:\\ \sum_i h_i\le h+b,\\ \sum_i X_i\le X\end{array}}\!\!
\left(\sum_{i=1}^b T(n/b,h_i,X_i) + O(n + h\log h + b^{O(1)}X\log^{O(1)}n)\right),
\] 
which solves to $T(n,h,X) = O((n + h\log h + b^{O(1)}X\log^{O(1)}n)\cdot \log_b n)$.
Setting $b=n^{\eps/c}$ for a sufficiently large constant $c$ yields the result.
\end{proof}

The main open question is whether the running time of the above theorem could
be improved to $O(n+h\log h+X)$.  A more modest goal is to find a deterministic
algorithm with running time $O(n + h\log h + X\log^{O(1)}n)$.

{\small

\PAPER{\bibliographystyle{alphaurl}}
\LIPICSv{\bibliographystyle{plainurl}}
\bibliography{polygon}

}

\appendix

\section{When randomization is allowed}\label{app:rand}

In this appendix, we describe a simpler randomized algorithm, which
can be viewed as a variant of an algorithm for $h=1$ by Eppstein, Goodrich, and Strash~\cite{EppsteinGS10}:

\begin{theorem}
Given a collection $P$ of $h$ (not necessarily simple) polygonal chains with a total of $n$ vertices,
we can compute $\TD(P)$ in $O(n + h\log h + X\log^{(c)}n)$ expected time
for an arbitrarily large constant $c$, where $X$ is the number of intersections.
\end{theorem}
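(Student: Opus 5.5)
We would mirror the structure of the deterministic algorithms in the paper, replacing the derandomized net by a genuine random sample. The recursion is unrolled only a constant number $c$ of times. The base case is handled by the deterministic algorithms already established.

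\textbf{Step 1: sample and decompose.} Take a random sample $R$ of $r=n/\log n$ edges of $P$, and add $O(h)$ zero-length segments at the start vertices of the chains. Compute $\TD(R)$ in $O((r+h)\log n + X_R)$ expected time by a standard randomized segment-intersection algorithm. Since $\mathbb E[X_R]\le X(r/n)^2\le X$, this costs $O(n+h\log n+X)$, which is $O(n+h\log h+X)$ by the same argument as at the end of Theorem~\ref{thm:disjoint}. By the standard Clarkson--Shor bound, every trapezoid of $\TD(R)$ is crossed by $O(\log n)$ edges of $P$ in expectation. More precisely, the expected sum of the conflict sizes over all trapezoids is $O(n+X)$.

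\textbf{Step 2: $r$-division and locating edges.} Compute a $t$-division of the dual of $\TD(R)$ with $t=\log^{O(1)} n$. The point is to avoid paying $O(\log n)$ point-location cost per edge. We locate only the start vertex of each chain; this takes $O(h\log n)$ time by point location in $\TD(R)$. We then walk along each chain through the trapezoids of $\TD(R)$. Each step of the walk crosses a trapezoid side, so the total walking time is proportional to the number of (edge, trapezoid) conflicts, which is $O(n+X)$ in expectation. Within a region $\gamma$ of the $t$-division, the walk is confined to $O(t)$ trapezoids.

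We then cut the chains at the region boundaries. This creates $O(\text{conflicts on boundary sides})$ extra chain pieces. Because the boundary has only $O((r+X_R)/\sqrt t)$ sides, each with $O(\log n)$ expected conflicts, the number of new pieces is $O((n/\log n)\cdot\log n/\sqrt t)=o(n/\log n)$ for a suitable polylog $t$. The subproblem in each region $\gamma$ then has $n_\gamma$ vertices and $h_\gamma$ chains, where $n_\gamma=O(t\log n)$ in expectation and the pieces total $h+o(n/\log n)$.

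\textbf{Step 3: recurse and combine.} Solve each subproblem recursively for $c-1$ levels. At the bottom, apply the deterministic theorem of Section~\ref{sec:intersect} to an instance of size $m=\log^{(c)}n$ (polylog iterated); it runs in $O(m+h_\gamma\log h_\gamma+X_\gamma m^{\eps})$ time. Summed over all subproblems, the linear terms give $O(n)$. The $h_\gamma\log h_\gamma$ terms are at most $(h+o(n/\log n))\log\log n$ at the first level. That is too large unless we handle it as Theorem~\ref{thm:disjoint} does, making the chain count per region $O(t\log n)$ so that the extra $h'\log\log n$ term with $h'=O(n/(\log n)^{\Omega(1)})$ is $O(n)$. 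The intersection terms give $X\cdot(\log^{(c)}n)^{O(1)}$, which, after renaming $c$, is $O(X\log^{(c)}n)$. Finally, we clip the local decompositions to the regions and glue them together in linear time.

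\textbf{Main obstacle.} The hardest part will be the expected-value accounting in the presence of intersections. We must bound conflicts for trapezoids of $\TD(R)$ when $R$ itself has intersections, and control the cost of walking chains across region boundaries (edges now cross non-vertical sides). We would first try the Clarkson--Shor framework with configurations defined by $O(1)$ segments. This gives expected total conflict size $O(n+X)$ per level, and $r$-divisions keep the boundary-induced blowup at each level to a $1/\sqrt t$ fraction.
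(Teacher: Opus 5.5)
Your plan follows the paper's: sample $n/\log n$ edges, build $\TD(R)$ and a polylogarithmic $t$-division, point-locate only chain start vertices and walk to get conflict lists, cut at region boundaries, recurse a constant number of levels, and finish with a known deterministic algorithm. Using the $O(n+h\log h+Xn^{\eps})$ theorem of Section~\ref{sec:intersect} at the bottom, instead of Bar-Yehuda--Chazelle plus a naive algorithm, is fine.

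The gap is in the probabilistic accounting. You bound the conflicts on each region-boundary side by ``$O(\log n)$ expected'' and the region sizes by ``$n_\gamma=O(t\log n)$ in expectation''. Neither follows from the Clarkson--Shor expectation bound:
\begin{itemize}
\item The boundary trapezoids are the separator of a $t$-division chosen \emph{after} $R$ is drawn. The average conflict size over all trapezoids therefore says nothing about them.
\item The recursion, and your base-case cost (superlinear in instance size through $h_\gamma\log h_\gamma$ and $X_\gamma m^{\eps}$), need a worst-case bound on $\max_\gamma n_\gamma$, not a per-region expectation.
\end{itemize}
The missing ingredient is the tail bound $\max_{\tau\in\TD(R)}|E_\tau|=O(s\log n)$ with probability $1-O(1/n^2)$ (Clarkson--Shor), with a fallback to an $O(n^2)$ algorithm when it fails, which is affordable in expectation. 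This costs an extra $\log n$ factor per trapezoid. Regions then have $n_\gamma=O(t\log^2 n)$ and there are $O(n\log n/\sqrt t)$ cut points, which is why the paper takes $t=\log^6 n$.

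The walk also does not cost $O(1)$ per conflict. When an edge leaves a trapezoid $\tau$ through its top or bottom (an edge of $R$), the next trapezoid is one of possibly many on the other side, so each step costs $O(\DEG(\tau))$. You need the degree-weighted bound: $\sum_\tau|E_\tau|\cdot\DEG(\tau)$ has expected value $O(n+X/s)$.

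Two smaller points. First, the zero-length segments at start vertices are unnecessary here. In the randomized setting the conflict-size bound already bounds all chain pieces inside a region; the segments are needed only in the deterministic version, where the net property controls vertical sides alone. Adding them also makes $R$ no longer a pure random sample, so the Clarkson--Shor bounds would need re-checking. Second, your worry that $(h+o(n/\log n))\log\log n$ is ``too large'' is unfounded. We have $h\log\log n\le h\log n=O(n+h\log h)$ by the same case split used at the end of Theorem~\ref{thm:disjoint}.
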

\begin{proof}
Let $E$ be the edge set of $P$.  Let $s$ and $t$ be parameters to be set later. The algorithm works as follows:

\begin{enumerate}
\item Let $R$ be a random subset of $E$ of size $n/s$.

\LIPICSv{\medskip}
\item Compute $\TD(R)$ in $O(|R|\log |R| + X)$ time~\cite{ChazelleE92,ClarksonS89,Mulmuley90}, and compute
a $t$-division of $\TD(R)$.  Preprocess $\TD(R)$ for point location~\cite{PreparataS85} in $O(|R|)$ time.

\LIPICSv{\medskip}
\item For each trapezoid $\tau\in\TD(R)$, we generate the list $E_\tau$ of all edges of $E$ intersecting $\tau$
(called a ``conflict list'').
To do this, we begin with the trapezoid $\tau$ containing the start vertex of a chain, which can be
found in $O(\log n)$ time by point location.  Let $e$ be the first edge of the chain.
We then walk along $e$ in $\TD(R)$ to identify the next trapezoid intersected by $e$;
this requires $O(\DEG(\tau))$ time, where $\DEG(\tau)$ denotes the number of neighboring trapezoids of $\tau$.
We continue until we encounter the endpoint of $e$, then reset $e$ to the next edge, and repeat.
The total time is $O(h\log n)$ for the $h$ point location queries, plus
$O(\sum_{\tau\in\TD(R)} |E_\tau|\cdot \DEG(\tau))$ for the walks.

\LIPICSv{\smallskip}
Clarkson and Shor \cite[Lemma 4.3 and Corollary 4.4]{ClarksonS89} showed that 
\LIPICSv{\smallskip}
\PAPER{
\begin{enumerate}
\item[(i)] $\sum_{\tau\in\TD(R)} |E_\tau|\cdot \DEG(\tau)$ has expected value $O( (n/s + X/s^2)\cdot s) = O(n+X/s)$;
\item[(ii)] $\max_{\tau\in\TD(R)} |E_\tau| = O(s\log n)$ with probability at least (say) $1-O(1/n^2)$.
\end{enumerate}
}\LIPICSv{
\begin{quote}
\begin{enumerate}
\item[(i)] $\sum_{\tau\in\TD(R)} |E_\tau|\cdot \DEG(\tau)$ has expected value $O( (n/s + X/s^2)\cdot s) = O(n+X/s)$;
\item[(ii)] $\max_{\tau\in\TD(R)} |E_\tau| = O(s\log n)$ with probability at least (say) $1-O(1/n^2)$.
\end{enumerate}
\end{quote}
}
\LIPICSv{\smallskip}
If (ii) fails, we can afford to switch to a naive $O(n^2)$-time algorithm.
So, from now on, assume that (ii) holds.

\LIPICSv{\medskip}
\item For each region $\gamma$ in the $t$-division and each (vertical or non-vertical) side
$e$ of the boundary $\partial\gamma$, find the intersections of $e$ with $P$.
There are $O(s\log n)$ such intersections by (ii) and
they can be generated in $O(s\log n)$ time (from the output of Step~3).
By summing over all $O(|R|/\sqrt{t})$ choices of $e$, there are $O((|R|/\sqrt{t})\cdot s\log n)$ such intersections in total and
they can be generated in $O((|R|/\sqrt{t})\cdot s\log n)$ time.

\LIPICSv{\smallskip}
Cut the chains of $P$ at all these intersection points to obtain a new collection $P'$ of polygonal chains.
Note that the intersections are generated in sorted order along the chains from Step~3.
The number of vertices in $P'$ is $n'\le n + O((|R|/\sqrt{t})\cdot s\log n)$,
and the number of chains in $P'$ is $h'\le h + O((|R|/\sqrt{t})\cdot s\log n)$.

\LIPICSv{\medskip}
\item 
For each region $\gamma$ in the $t$-division,
let $P'_\gamma$ be the chains of $P'$ inside $\gamma$,
and compute $\TD(P'_\gamma)$ recursively.
Let $n'_\gamma$ and $h'_\gamma$ be the number
of vertices and chains in $P'_\gamma$ respectively.
Note that $\sum_\gamma n'_\gamma = n'$ 
and $\sum_\gamma h'_\gamma = h'$.
Furthermore, $\max_\gamma n'_\gamma\le O(t\cdot s\log n)$ by (ii).

\LIPICSv{\smallskip}
Afterwards, we can piece together $\TD(P'_\gamma)$ (after clipping to $\gamma$) over all regions $\gamma$,
to obtain $\TD(P)$ in $O(n')$ additional time.
\end{enumerate}

Set $s=\log n$ and $t=\log^6 n$.
The total expected running time satisfies the recurrence
\[ T(n,h,X)\ \le \max_{\scriptsize\begin{array}{c}\sum_\gamma n'_\gamma\le n + O(n/\log n)\\
\sum_\gamma  h'_\gamma\le h+O(n/\log n)\\  \sum_\gamma X_\gamma \le X\\ \max_\gamma n'_\gamma\le O(\log^8n)\end{array}} 
\left(\sum_\gamma T(n'_\gamma,h'_\gamma,X_\gamma) + O(n + h\log n + X)\right).
\] 

For the base case, we have $T(n,h,0)=O(n+h\log^{O(1)}n)$ by Bar-Yehuda and Chazelle's 
 algorithm~\cite{Bar-YehudaC94}, and $T(n,h,X)=O(n^2)$ by a naive algorithm for $X>0$.
Applying the recursion for $c+1$ levels before switching to the base case yields a
total time bound of $O(n + h\log n  + \sum_{j=1}^{c+1} (h+n/\log^{(j)}n) \log^{O(1)}(\log^{(j)}n) + X((\log^{(c+1)}n)^8)^2)
\le O(n + h\log h + X\log^{(c)}n)$.
\end{proof}

\end{document}